\definecolor{webred}{rgb}{0.5,0,0}
\definecolor{webblue}{rgb}{0,0,0.8}
\begin{document}
\title{The Black-Box Simplex Architecture for Runtime Assurance of Autonomous CPS}
\titlerunning{The Black-Box Simplex Architecture for Runtime Assurance of CPS}
%
\author{Usama Mehmood \and
Sanaz Sheikhi \and
Stanley Bak\and
Scott A.~Smolka\and\\
Scott D.~Stoller}
\authorrunning{U. Mehmood et al.}
%
\institute{Department of Computer Science, Stony Brook University, Stony Brook NY, USA\\
\email{\{umehmood, ssheikhi, sbak, sas, stoller\}@cs.stonybrook.edu}}
\maketitle              

\newcommand{\W}{\mathcal{W}}
\newcommand{\X}{\mathcal{X}}
\newcommand{\U}{\mathcal{U}}
\newcommand{\dmstep}{\ensuremath{dm_\textsf{step}}\xspace}
\newcommand{\dmupdate}{\ensuremath{dm_\textsf{update}}\xspace}

\newcommand{\um}[1]{#1}

\begin{abstract}
\newcommand{\blue}[1]{\textcolor{blue}{#1}}

The Simplex Architecture is a runtime assurance framework where control authority may switch from an unverified and potentially unsafe \emph{advanced controller} to a backup \emph{baseline controller} in order to maintain the safety of an autonomous cyber-physical system.
%
In this work, we show that runtime checks can replace the requirement to statically verify safety of the baseline controller.
This is important as there are many powerful control techniques, such as model-predictive control and neural network controllers, that work well in practice but are difficult to statically verify.
Since the method does not use internal information about the advanced or baseline controller, we call the approach the \emph{Black-Box Simplex Architecture}.
We prove the architecture is safe and present two case studies where (i)~model-predictive control provides safe multi-robot coordination, and (ii)~neural networks provably prevent collisions in groups of F-16 aircraft, despite the controllers occasionally outputting unsafe commands.
\keywords{Black-Box Simplex  \and Runtime Assurance \and Autonomous CPS.}
\end{abstract}

\newcommand{\blue}[1]{\textcolor{blue}{#1}}
\newcommand{\red}[1]{\textcolor{red}{#1}}
\newcommand{\orange}[1]{\textcolor{orange}{#1}}
\newcommand{\violet}[1]{\textcolor{violet}{#1}}
\newcommand{\bsa}{\textrm{BSA}\xspace}

\section{Introduction}
\label{sec:intro}
Autonomous cyber-physical systems (CPS) have the potential to transform vital domains such as transportation, health-care, and energy management. 
As these systems perform complex functions, they often require complex designs.
Moreover, since autonomous CPS interact with the physical world, they are typically safety-critical. 
Formal analysis, however, can be difficult for complex systems.

In the development of such CPS, powerful control techniques such as model-predictive control and deep reinforcement learning are increasingly being used instead of traditional 
controller design techniques. 
%
%
%
Such trends exacerbate the safety verification problem.
%
%
Additionally, there is increasing interest in systems that can \emph{learn in the field}, changing their behaviors based on observations.
Classical verification strategies are poorly suited for such designs.

One approach for dynamically providing safety for systems with complex and unverified components is \emph{runtime assurance}~\cite{clark2013}, where the state of the plant is monitored at runtime to mitigate possible imminent violations of formal properties. 
%
A well-known runtime assurance technique is the Simplex Control Architecture~\cite{seto1998simplex,sha2001}, which has been applied to a wide range of systems~\cite{desai2019_rsl,dung2017,schierman2015}.
%
%
In the original Simplex Architecture, shown in Figure~\ref{fig:simplex_bbsimplex}(a), the \emph{baseline controller} (BC) and the \emph{decision module} (DM) are part of the trusted computing base. 
%
%
The DM monitors the state of the system and switches control from the \emph{advanced controller} (AC) to the BC if using the former could result in a safety violation in the near future. 
%
%
The original Simplex Architecture requires creating a provably safe 
BC, which can be difficult.
%
%
In this work, we eliminate this requirement through
a greater reliance on
runtime verification.
%
%

In the proposed \emph{Black-Box Simplex Architecture} (\bsa), shown in Figure~\ref{fig:simplex_bbsimplex}(b), the BC (now referred to as the \emph{Lookahead Baseline Controller} (LBC)), no longer needs to be statically verified, and can even be incorrect.
The tradeoff is that the DM performs more extensive runtime checking and stores backup command sequences from previous computation steps.
The DM performs simulation or reachability analysis based on a known system model. If the DM's computation time is too large, \bsa keeps the system safe by switching control to a stored command sequence generated at an earlier step by the LBC and checked for safety by the DM.
The specifics of the approach will be discussed in Section~\ref{sec:BB}.


\begin{figure}[t]
\centering
\subfloat[Traditional Simplex Architecture]{\includegraphics[width=.50\columnwidth]{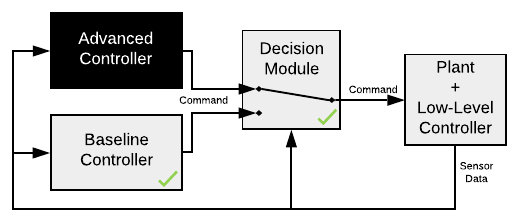}}\hfill
\subfloat[Black-Box Simplex Architecture]{\includegraphics[width=.50\columnwidth]{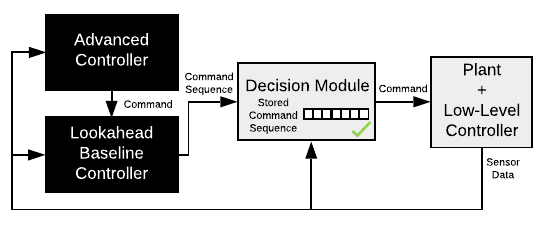}}\hfill
\caption{The Black-Box Simplex Architecture guarantees safety despite a black-box advanced controller and a black-box baseline controller.}\label{fig:simplex_bbsimplex} \vspace{-2em}
\end{figure}

%
%
%
%
%
%
%

We prove two theorems about this architecture: (i)~safety is always guaranteed, and (ii)~when the baseline and advanced controllers perform well (to be formally defined in Section~\ref{sec:BB}), the architecture is transparent: the advanced controller appears to have full control of the system.
The practicality of these assumptions and the utility of the \bsa architecture itself is
demonstrated through two significant
case studies.
In the first, a multi-robot coordination system uses a BC based on a model-predicative control algorithm with a potential-field approach for collision avoidance. 
Such a setup is difficult to statically verify as it depends on the online solution of a nonlinear optimization problem.
In the second, a mid-air collision avoidance system for groups of F-16 aircraft is created from imperfect logic encoded in neural networks.
A preview of the second case study is shown in Figure~\ref{fig:acasxu15}, where directly using the neural networks  causes a collision (left), but the Black-Box Simplex approach safely navigates the scenario, resulting in an emergent maneuver similar to a roundabout (right).

The rest of the paper is organized as follows. Section~\ref{sec:BB} presents a formal definition of the Black-Box Simplex Architecture, including proofs of safety and transparency. Section~\ref{sec:eval} features two case studies implementing the architecture. Section~\ref{sec:related} discusses related work and Section~\ref{sec:conclusions} offers our concluding remarks.

\begin{figure*}[t]
\centering
\subfloat[Original System (unsafe, the two red aircraft collide)]
{\includegraphics[width=.48\textwidth]{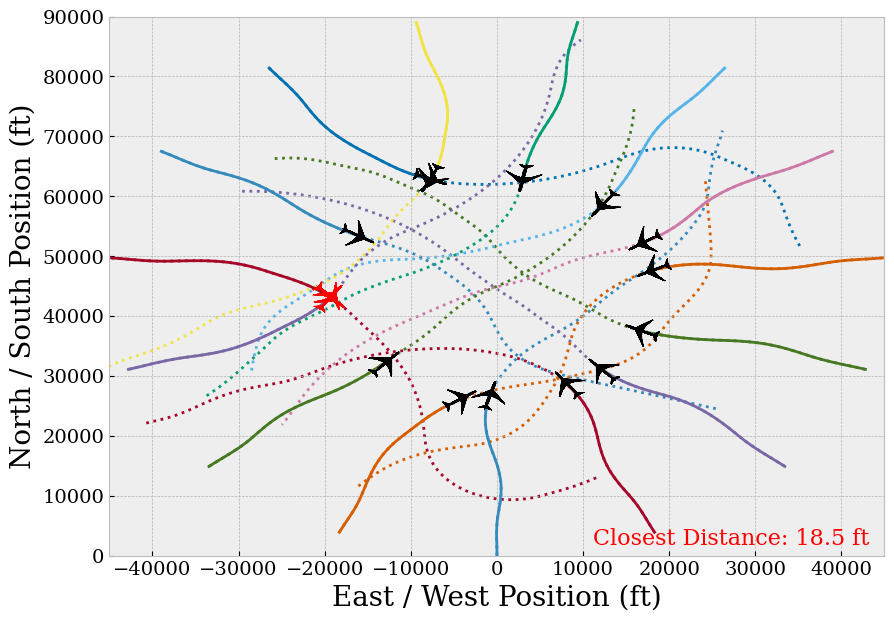}}\hfill
\subfloat[Black-Box Simplex (safe, snapshot shown at closest distance)]
{\includegraphics[width=.48\textwidth]{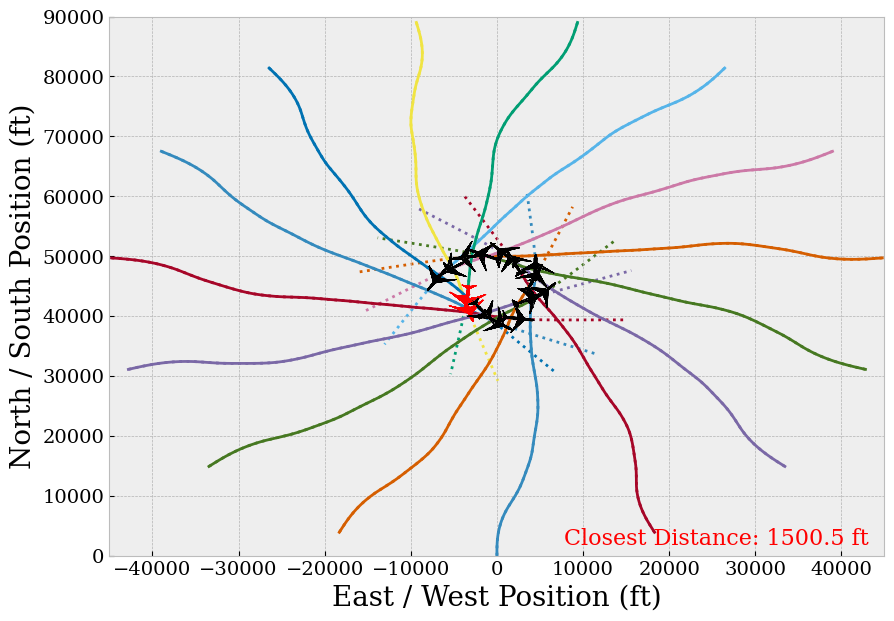}}
\caption{Black-Box Simplex safely navigates complex scenarios. In the 15-aircraft case, all aircraft cross the circle while maintaining a 1500 ft separation distance.}\label{fig:acasxu15} \vspace{-2em}
\end{figure*}
\section{Black-Box Simplex}
\label{sec:BB}

The traditional Simplex Architecture, shown in Figure~\ref{fig:simplex_bbsimplex}(a), preserves the safety of the system while permitting the use of an unverified AC. 
It does this by using the AC in conjunction with a verified BC and a verified DM. 
The DM cannot simply check if the next state is safe, as cyber-physical systems have inertia and it may be too late to take corrective action.
Rather, the verified design of a Simplex system usually requires offline reasoning with respect to a trusted BC and the system dynamics.

If the system dynamics are linear and the admissible states are defined with linear constraints,
a state-feedback BC and a DM can be synthesized by solving a linear matrix inequality~\cite{seto1998simplex}.
If the system dynamics or constraints are nonlinear, however, there is no direct approach to create a trusted BC and DM.
This prevents more widespread use of the traditional Simplex Architecture.

%
The proposed Black-Box Simplex Architecture removes the requirement that the BC is statically verified, allowing provable safety with both an unverified AC and an unverified BC.
Its architecture is shown in Figure~\ref{fig:simplex_bbsimplex}(b).
Apart from eliminating the need to establish safety of the BC, \bsa differs from the traditional Simplex Architecture in other important ways. 
First, the AC shares its command with the LBC instead of passing it directly to the DM.
Second, the LBC uses this command as the starting point of a \emph{candidate safe command sequence}. 
%

Candidate command sequences may be generated using state-of-the-art controller designs, including neural networks trained with reinforcement learning or MPC.  Note that a candidate command sequence is not guaranteed to be safe until it is verified by the DM through a runtime check.  
%
Specifically, the DM checks safety of the LBC's candidate command sequence, rejecting it if safety is not ensured. The DM checks safety by running simulations (rollouts) for deterministic systems; for systems with uncertainty, it performs online reachability computation~\cite{reachflow2020,bak2014rtss,althoff2014online}. 
BSA does not fail if the DM cannot finish the computation in time. 
Rather, it aborts the computation and switches to a backup command sequence that continues to ensure system safety.  
It can subsequently switch back to the AC when the runtime checks finish in time.

As long as the AC drives the system through states from which the LBC can recover, it continues to actuate the system.
However, if the LBC fails to compute a candidate command sequence that maintains safety---due to a fault of the unverified AC or the unverified BC, or due to excessive computation time for any of the components---the DM can still recover the system using the safe command sequence from the previous step.
Note that the DM does not generate any command sequences.  It only performs runtime checks and stores command sequences to maintain a safe backup plan at all times.

The applicability of BSA depends on the feasibility of two system-specific steps: (i)~constructing candidate command sequences and (ii)~proving their safety at runtime.
For some systems, a safe command sequence can simply bring the system to a stop.
%
An autonomous car, for example, could have a safe command sequence that steers the car to the side of the road and then stops.
A safe sequence for a drone might direct it to the closest emergency landing location.
For an rapidly-moving autonomous fixed-wing aircraft swarm, a safe sequence could fly all aircraft in non-intersecting circles to allow time for human intervention.
Proving safety of a given command sequence can also be challenging and depends on the system dynamics.
For nondeterministic systems, this could involve performing reachability computations at runtime~\cite{reachflow2020,bak2014rtss,althoff2014online}.
Such techniques assume an accurate system model is available in order to compute reachable sets.
Notice that traditional offline control theory also requires this assumption, so we do not view it as overly burdensome.

In \bsa, although both controllers are unverified, we do not combine them into a single unverified controller.
This allows for a logical separation of concerns, where the AC focuses on making progress on the mission, and the BC focuses on generating safe backup plans.



\subsection{Formal Definition of Black-Box Simplex}
\label{sec:bbdefs}

We formalize the behavior and requirements for the components of the Black-Box Simplex Architecture in order to prove properties about the system's behavior.

\textbf{Plant Model.} We consider discrete-time plant dynamics, modeled as a function
\begin{equation}
\label{eq:control_system}
    f(\underbrace{x_i}_{\text{state}}, \underbrace{u_i}_{\text{input}}, \underbrace{w_i}_{\text{disturbance}}) = \underbrace{x_{i+1}}_{\text{next state}}
\end{equation}
where $i \in \mathbb{Z}^{+}$ is the time step, $x_i \in \X$ is the system state, $u_i \in \U$ is a control input command, and $w_i \in \W$ is an environmental disturbance.
We sometimes also consider a \emph{deterministic} version of the system, where the disturbance $w_i$ can be taken to be zero at every step.

\textbf{Admissible States.} The system is characterized by a set of operational constraints which include physical limits and safety properties. 
States that satisfy all the operational constraints are called \emph{admissible states}.

\textbf{Candidate Command Sequences.} 
A single-input command is some $u \in \U$, and a $k$-length sequence of commands is written as $\overline{u} \in \mathcal{U}^k$. The length of a sequence can be written as $\overline{u}_{\text{len}} = k$, where we also can take the length of a single command, $u_{\text{len}} = 1$. We use Python-like notation for subsequences, where the first element in a sequence is $\overline{u}[0]$, and the rest of the sequence is $\overline{u}[\text{1:}]$.




\textbf{Decision Module.} The decision module in Black-Box Simplex stores a command sequence $\overline{s}$, which we sometimes call the decision module's state. 
The behavior of the DM is defined through two functions, \dmupdate and \dmstep.
The \dmupdate function
attempts to modify the DM's stored command sequence: 
\begin{equation}
\label{eq:dm_update}
    \dmupdate(\underbrace{x}_{\text{state}}, \underbrace{\overline{s}}_{\text{cur seq}}, \underbrace{\overline{t}}_{\text{proposed seq}}) = \underbrace{\overline{s'}}_{\text{new seq}}
\end{equation}
where if $\overline{s'} = \overline{t}$ then we say that the proposed command sequence is \emph{accepted}; otherwise $\overline{s'} = \overline{s}$ and we say that it is \emph{rejected}.  Correctness conditions on \dmupdate are given in Section~\ref{sec:theorems}.
\um{Note that the DM will accept a safe command sequence from the AC even if the previous command sequence from the AC was rejected because it was unsafe. As in~\cite{dung2019_nsa}, we refer to this as {\em reverse switching}, since it switches control back to the AC.}

The \dmstep function produces the next command $u$ to apply to the plant, as well as the next step's command sequence $\overline{s'}$ for the DM:
\begin{equation}
\label{eq:dm_step}
    \dmstep(\underbrace{\bar{s}}_{\text{cur seq}}) = (\underbrace{u}_{\text{next input}}, \underbrace{\overline{s'}}_{\text{next seq}})
\end{equation}
where $u = \overline{s}[0]$ and $\overline{s'}$ is constructed from $\bar{s}$ by removing the first command \um{(if the current sequence $\overline{s}$ has only one command then it is repeated)}:
\begin{equation*}
    \overline{s'} = \begin{cases}   \overline{s} &\text{if } \overline{s}_{\text{len}} = 1 \\  \overline{s}[\text{1:}] &\text{otherwise} \end{cases}
\end{equation*}

\textbf{Controllers.} The AC and LBC are defined using functions of the system state. 
\um{In particular, the AC is defined by a function $ac(x) = u$, where $u \in \mathcal{U}$ is a single command.}
\bsa's \emph{look-ahead baseline controller} is defined by $lbc(x) = \overline{u}$, \um{where $\overline{u} \in \mathcal{U}^k$ is a $k$-length command sequence}. The LBC outputs candidate command sequences that start with a given command, specifically, the command proposed by the AC. These can be defined with a function $\textit{lbc}_{ac}(x) = \overline{u}$, with 
$\overline{u}[0] = ac(x)$.
We generally drop the subscript on $lbc$, as it is clear from context.

\textbf{Execution Semantics.} 
At step $i$, given system state $x_i$ and DM state $\overline{s_i}$, the next system state $x_{i+1}$ and next DM state $\overline{s_{i+1}}$ are computed with the following sequence of steps: 
\um{(1)~$z_i = ac(x_i)$};
(2)~$\overline{t_i} = lbc(x_i)$, with $\overline{t_i}[0] = z_i$; 
(3)~$\overline{s'_i} = \dmupdate(x_i, \overline{s_i}, \overline{t_i})$; 
(4)~$(u_i, \overline{s_{i+1}}) = \dmstep(\overline{s'_i})$; 
(5)~$x_{i+1} = f(x_i, u_i, w_i)$, for some disturbance $w_i \in \mathcal{W}$.

\subsection{Safety and Transparency Theorems}
\label{sec:theorems}

We define several relevant concepts and then state and prove safety and transparency theorems for Black-Box Simplex. 
\begin{definition}[Safe System Execution]
A system execution is called \emph{safe} if and only if the system state is admissible at every step.
\end{definition}
Safety can be ensured by following a permanently safe command sequence from a given system state.
\begin{definition}[Permanently Safe Command Sequence]
Given state $x_i$, a \um{k-length} \emph{permanently safe command sequence} $\overline{s_i} \um{\in \mathcal{U}^k}$ is one where the state $x_j$ is admissible at every step $j \geq i$, where $(u_i,\overline{s_{i+1}}) = \dmstep(\overline{s_i})$, and $x_{i+1} = f(x_i, u_i, w_i)$, for every choice of disturbance $w_i \in \mathcal{W}$.
\end{definition}
That is, the system state will remain admissible when applying each command in the sequence $\overline{s_i}$, and then repeatedly using the last command forever, according to the semantics of $\dmstep$. 
\um{More general definitions of permanently safe command sequences could be considered, such as repeating a suffix rather than just the last command. For simplicity we do not explore this here.
}

We define recoverable commands to be commands  that result in states that have permanently safe command sequences.
\begin{definition}[Recoverable Command]
Given state $x_i$, a \emph{recoverable command} $u$ is one where there exists a permanently safe command sequence from $x_{i+1}$, where $x_{i+1} = f(x_i, u, w_i)$, for every choice of disturbance $w_i \in \mathcal{W}$.
\end{definition}
Optimal decision modules are defined by requiring the \dmupdate function accept all sequences that can guarantee future safety.
\begin{definition}[Optimal Decision Module]
An \emph{optimal decision module} has a \dmupdate function that accepts $\overline{t}$ at state $x$ if and only if $\overline{t}$ is a permanently safe command sequence starting from $x$.
\end{definition}
A correct DM is one which only accepts sequences that can guarantee future safety. A correct DM, by this definition, could reject every command sequence.
\begin{definition}[Correct Decision Module]
\label{def:correct_dm}
A \emph{correct decision module} has a \dmupdate function that accepts $\overline{t}$ at state $x$ only if $\overline{t}$ is a permanently safe command sequence starting from $x$.
\end{definition}
The role of the BC is to try to keep the system safe. 
An optimal look-ahead BC can be defined as one that always produces a permanently safe command sequence when it exists.
This is optimal in the sense that during system execution, it allows the DM to override the AC as infrequently 
as possible while still guaranteeing safety.
This notion of optimality can be defined with respect to a specific advanced controller $ac$.
%
%
\begin{definition}[Optimal Look-Ahead Baseline Controller]
\um{Given state $x$ with $u = ac(x)$, if there exists a permanently safe command sequence $\overline{s}$ from $x$ with $\overline{s}[0] = u$, then an \emph{optimal look-ahead
baseline controller} 
will always produce a permanently safe command sequence $\overline{t}$, with $\overline{t}[0] = u$.}
\end{definition}
Note that $\overline{t}$ may differ from $\overline{s}$, as there can be multiple permanently safe command sequences from the same state.
%
%
\begin{theorem}[\textbf{Safety}]
\label{thm:safety}
Given initial state $x_0$ along with an initial permanently safe command sequence $\overline{s_0}$, if the decision module is correct, then the system's execution is safe regardless of the outputs of the advanced controller $ac$ and look-ahead baseline controller $lbc$.
\end{theorem}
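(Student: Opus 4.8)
The plan is to prove by induction on the time step $i$ that the following invariant holds throughout the execution: the decision module's stored sequence $\overline{s_i}$ is a permanently safe command sequence from the current state $x_i$. Once this invariant is established, safety is immediate, since by the definition of a permanently safe command sequence the state associated with such a sequence is admissible (take $j = i$ in the definition), so $x_i$ is admissible for every $i$, which is exactly the claim that the execution is safe. The base case at $i = 0$ is precisely the hypothesis that $\overline{s_0}$ is a permanently safe command sequence from $x_0$, so there is nothing to do there.

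For the inductive step, I would assume $\overline{s_i}$ is permanently safe from $x_i$ and trace through the five steps of the execution semantics. Steps (1) and (2) compute $z_i$ and $\overline{t_i}$ but do not touch the decision module's state. In step (3), $\dmupdate(x_i, \overline{s_i}, \overline{t_i}) = \overline{s'_i}$; since the decision module is correct (Definition~\ref{def:correct_dm}), if $\overline{t_i}$ is accepted then $\overline{s'_i} = \overline{t_i}$ is permanently safe from $x_i$, and if it is rejected then $\overline{s'_i} = \overline{s_i}$, which is permanently safe from $x_i$ by the induction hypothesis; so in either case $\overline{s'_i}$ is permanently safe from $x_i$. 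Steps (4) and (5) apply $\dmstep(\overline{s'_i}) = (u_i, \overline{s_{i+1}})$ and $f(x_i, u_i, w_i) = x_{i+1}$ for the actual disturbance $w_i \in \mathcal{W}$. It then remains to show $\overline{s_{i+1}}$ is permanently safe from $x_{i+1}$, and this is exactly what the (recursive) definition of a permanently safe command sequence delivers: the $\dmstep$-driven trajectory from $\overline{s'_i}$ starting at $x_i$ keeps the state admissible under every disturbance choice, and its tail, which begins at $x_{i+1}$ with decision-module state $\overline{s_{i+1}}$, inherits the same property. The edge case in which $\overline{s'_i}$ has length one, so that $\dmstep$ repeats its single command, is already subsumed by the definition and needs no separate treatment.

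The main obstacle I anticipate is one of rigor rather than of ideas: the definition of ``permanently safe command sequence'' is stated recursively and somewhat informally, so I would first restate it as an explicit property of the infinite trajectory obtained by iterating $\dmstep$ on the command sequence and $f$ on the state under an arbitrary disturbance sequence, and then isolate the one fact that drives the induction, namely that this predicate is closed under advancing a single $\dmstep$/$f$ step. The only other point to be careful about is the direction of the disturbance quantifier: permanent safety is a statement about \emph{every} disturbance, whereas the execution realizes one particular disturbance $w_i$ at each step, so the execution is a special case and the invariant transfers to it without any loss.
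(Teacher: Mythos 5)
Your proposal is correct and follows essentially the same route as the paper's proof: an induction on the step number maintaining the invariant that the stored sequence $\overline{s_i}$ is permanently safe from the current state $x_i$, with correctness of \dmupdate handling acceptance and the induction hypothesis handling rejection. Your version is in fact slightly more careful than the paper's, since you explicitly isolate the closure fact that the tail produced by \dmstep remains permanently safe from the successor state under any realized disturbance, a step the paper leaves implicit.
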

\begin{proof}
The command executed at each step comes from the state of the decision module $\overline{s_i}$, which maintains the invariant that $\overline{s_i}$ is always a permanently safe command sequence from the current system state $\overline{x_i}$. 
The \dmupdate function can only replace a permanently safe command sequence with another permanently safe command sequence.
Since initially, $\overline{s_0}$ is permanently safe, then by induction on the step number, the decision module's command sequence at every step is permanently safe, and so the system's execution is safe.
\end{proof}

Although safety is important, achieving only safety is trivial, as a decision module can simply reject all new command sequences.
A runtime assurance system must also have a transparency property, where the advanced controller retains control in sufficiently well-designed systems.

\begin{theorem}[\textbf{Transparency}]
\label{thm:transparency}
\um{If (i)~from every state $x_i$ encountered, the output of the advanced controller $ac(x_i) = z_i$ is a recoverable command, (ii)~the look-ahead baseline controller is optimal, and (iii)~the decision module is optimal, then the input command used to actuate the system at every step is the advanced controller's command, $z_i$.}
\end{theorem}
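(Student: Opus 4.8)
The plan is to prove, by induction on the step index $i$, the strengthened claim that $x_i$ is admissible and that there exists a permanently safe command sequence from $x_i$ whose first command is $z_i = ac(x_i)$. Granting this claim, the theorem follows at once: by optimality of the look-ahead baseline controller, the existence of such a sequence forces $\overline{t_i} = lbc(x_i)$ to itself be a permanently safe command sequence from $x_i$ with $\overline{t_i}[0] = z_i$; by optimality of the decision module, $\dmupdate(x_i, \overline{s_i}, \overline{t_i})$ must then accept, so $\overline{s'_i} = \overline{t_i}$; and finally $\dmstep(\overline{t_i})$ returns the command $u_i = \overline{t_i}[0] = z_i$, which is exactly the command used to actuate the plant in step~(5) of the execution semantics. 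Observe that the decision module's stored sequence $\overline{s_i}$ never enters the argument, since an optimal \dmupdate decides acceptance purely from whether the proposed sequence is permanently safe from the current state.

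The crux is therefore to exhibit, at each step, a permanently safe command sequence from $x_i$ that begins with $z_i$. Here I would invoke hypothesis~(i): $z_i$ is a recoverable command from $x_i$, so there is a command sequence $\overline{r_i}$ that is permanently safe from $x_{i+1} = f(x_i, z_i, w_i)$ for every disturbance $w_i$. Let $\overline{q_i}$ be the sequence with $\overline{q_i}[0] = z_i$ and $\overline{q_i}[\text{1:}] = \overline{r_i}$. I would verify $\overline{q_i}$ is permanently safe from $x_i$ directly against the definition: the state at step $i$ is $x_i$, admissible by the inductive hypothesis; $\dmstep(\overline{q_i})$ returns command $z_i$ and remaining sequence $\overline{r_i}$ (using $\overline{q_i}_{\text{len}} \geq 2$, since $\overline{r_i}$ is nonempty), producing $x_{i+1}$; and from $x_{i+1}$ the sequence $\overline{r_i}$ is permanently safe, so every later state is admissible. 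This same computation shows $x_{i+1}$ is admissible, which re-establishes the admissibility half of the hypothesis for step $i+1$. For the base case, $x_0$ is admissible, which (as in the setting of Theorem~\ref{thm:safety}) we take to hold at initialization, e.g.\ because the system starts with an initial permanently safe command sequence.

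The step that needs the most care is the handling of the disturbance quantifiers in the prepending construction. The definition of a permanently safe command sequence requires a \emph{single} fixed sequence to keep the state admissible under \emph{all} disturbance choices, so for $\overline{q_i}$ to qualify we need one $\overline{r_i}$ that works simultaneously from $f(x_i, z_i, w_i)$ for every $w_i$; this is precisely the content of ``recoverable command'' under the reading that the existence of $\overline{r_i}$ is asserted before (outside) the universal quantifier over $w_i$, and I would make that reading explicit. Everything else is bookkeeping: the equality $\overline{t_i}[0] = z_i$ is part of the optimal look-ahead baseline controller's specification, acceptance by the optimal decision module is immediate once $\overline{t_i}$ is known to be permanently safe from $x_i$, and reading off $u_i = z_i$ is just unwinding steps~(1)--(5) of the execution semantics.
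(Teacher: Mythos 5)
Your proof is correct and follows the same logical chain as the paper's: recoverability of $z_i$ yields a permanently safe sequence from $x_i$ starting with $z_i$; optimality of the look-ahead baseline controller then forces $\overline{t_i}$ to be such a sequence; optimality of the decision module forces acceptance; and $\dmstep$ hands $z_i$ to the plant. The difference is that the paper simply \emph{asserts} the first link---``since $z_i$ is recoverable, there exists a permanently safe command sequence from $x_i$ that starts with $z_i$''---whereas you actually prove it, and in doing so you surface two points the paper glosses over. First, the prepending construction $\overline{q_i} = z_i \cdot \overline{r_i}$ only yields a sequence that is permanently safe \emph{from $x_i$} if $x_i$ itself is admissible (the definition demands admissibility at every step $j \geq i$, including $j=i$), which the definition of a recoverable command does not supply; your induction carrying admissibility of $x_i$ forward, anchored at an admissible $x_0$, is exactly what is needed to close this, and the paper's single-step argument silently relies on it. Second, you correctly flag that the construction needs the $\exists\,\overline{r_i}\,\forall w_i$ reading of the recoverable-command definition (one witness sequence uniform over disturbances), which the paper's phrasing leaves ambiguous. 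So: same route, but your version is the one that actually compiles against the stated definitions; the paper's is shorter at the cost of leaving these steps implicit.
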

\begin{proof}
The proof proceeds by stepping through an arbitrary step $i$ of the execution semantics defined in Section~\ref{sec:bbdefs}.
Since the output of the advanced controller $ac(x_i) = z_i$ is assumed to be recoverable, there exists a permanently safe command sequence from $x_i$ that starts with $z_i$. 
By the definition of an optimal look-ahead baseline controller, since there exists a permanently safe command sequence, the output $lbc(x_i) = \overline{t}$ must also be a permanently safe command sequence, with $\overline{t}[0] = z_i$ as required by the definition of a look-ahead baseline controller.
In step~(3) of the execution semantics, $\dmupdate(x_i, \overline{s_i}, \overline{t_i}) = \overline{s'_i}$.
Since $\overline{t}$ is a permanently safe command sequence and the decision module is optimal, the command sequence will be accepted by the decision module, and so $\overline{s'_i} = \overline{t}$.
Step~(4) of the execution semantics produces $u_i$, which is the first command in the sequence $\overline{t}$. 
As shown before, this command is equal to $z_i$, which is used in step~(5) of the execution semantics to actuate the system.
This reasoning applies at every step, and so the advanced controller's command is always used.
\end{proof}

\noindent
\textbf{Discussion.}
There are several practical considerations with the described approach.
For example, the black-box controllers may not only generate unsafe commands, but a controller implementation may fail to generate a command at all, for example, entering an infinite loop.
To account for such behaviors, a runtime cap can be used with a default command sequence assumed if the DM receives no input.
For increased protection, the black-box controllers can be isolated on dedicated hardware~\cite{bak2009rtas} so that they do not, for example, crash a shared operating system.
Also, the DM's analysis of the command sequence is nontrivial and could involve a runtime reachability computation.
If this may take too long, we again could use a runtime cap. 
This means that the practicality of the architecture depends on the efficiency of runtime reachability methods, an active area of research orthogonal to this work.

Another consideration is the feasibility of coming up with permanently safe command sequences.
For systems where landing or coming to a stop is considered safe, remaining there forever will be permanently safe.
Other approaches, which we use the case studies in the next section, rely on geometric arguments to show permanent safety. 
Methods from control theory could also be used for this, such as computing forward invariant sets~\cite{kapinski2015discovering} or using a locally stable controller.
For example, using the indirect method of Lyapnuov, a closed-loop system's equilibrium point $x^*$ can be proven to be stable using linearization, along with conservative bounds on its basin of attraction~\cite{murray1994mathematical}.
The BC would then strive to get the system into the basin of attraction of $x^*$, and then use the locally stable controller to ensure indefinite future safety.
Directly using the locally stable controller as the BC, however, would be overly conservative, as it would not allow the system to leave the (potentially small) basin of attraction.

%
%
%

\section{Case Studies}
\label{sec:eval}
In this section, we apply the approach to two case studies: a multi-robot coordination system, and a mid-air collision avoidance system for groups of F-16 aircraft.
%
%
\subsection{Multi-Robot Coordination}
\label{ssec:robot}

We consider a multi-agent system (MAS), indexed by $\mathcal{M}=\{1,...,n\}$, of planar robots modeled with discrete-time 
dynamics of the form:
\begin{equation}
\begin{split}
    p_i(k+1) &= p_i(k) + dt \cdot v_i(k),\quad |v_i(k)| < v_{max} \\
    v_i(k+1) &= v_i(k) + dt \cdot a_i(k),\quad |a_i(k)| < a_{max}
\end{split}
\end{equation}
where $p_i$, $v_i$, $a_i$ $\in \mathbb{R}^2$ are the position, velocity and acceleration of agent $i$, respectively, at time step $k$, and $dt \in \mathbb{R}^+$ is the time step.
The magnitudes of velocities and accelerations are bounded by $v_{max}$ and $a_{max}$, respectively.
The acceleration $a_i$ is the control input for agent $i$. The combined state of all agents is denoted as $x = [p_1^T, v_1^T,...,p_n^T, v_n^T]^T$, and their  accelerations are $a = [a_1^T,...,a_n^T]^T$.


\begin{figure}[t]
\centering
\subfloat[Initial configuration, ${k = 1}$]{\includegraphics[width=.25\columnwidth]{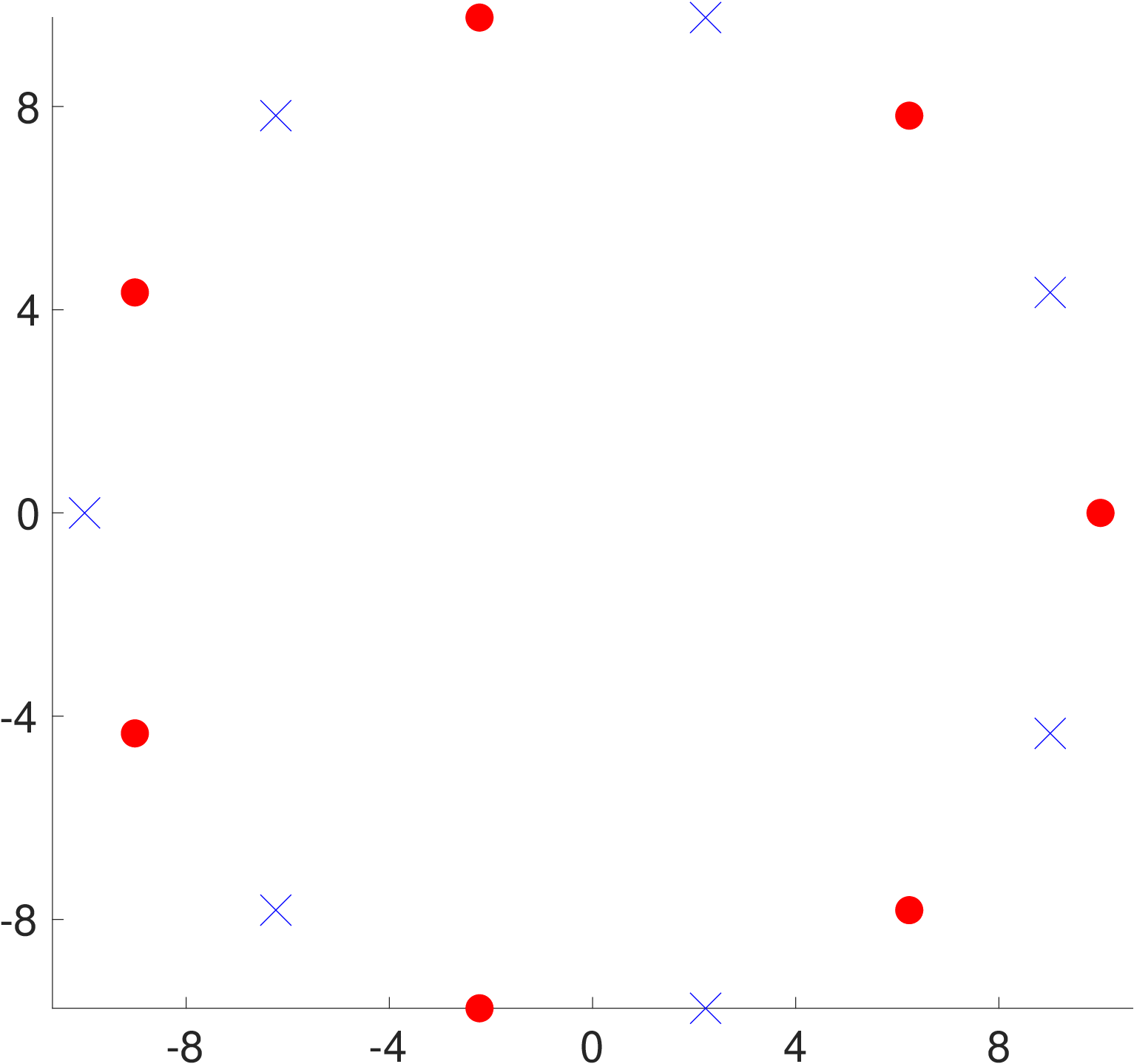}}\hfill
\subfloat[$k = 10$]{\includegraphics[width=.25\columnwidth]{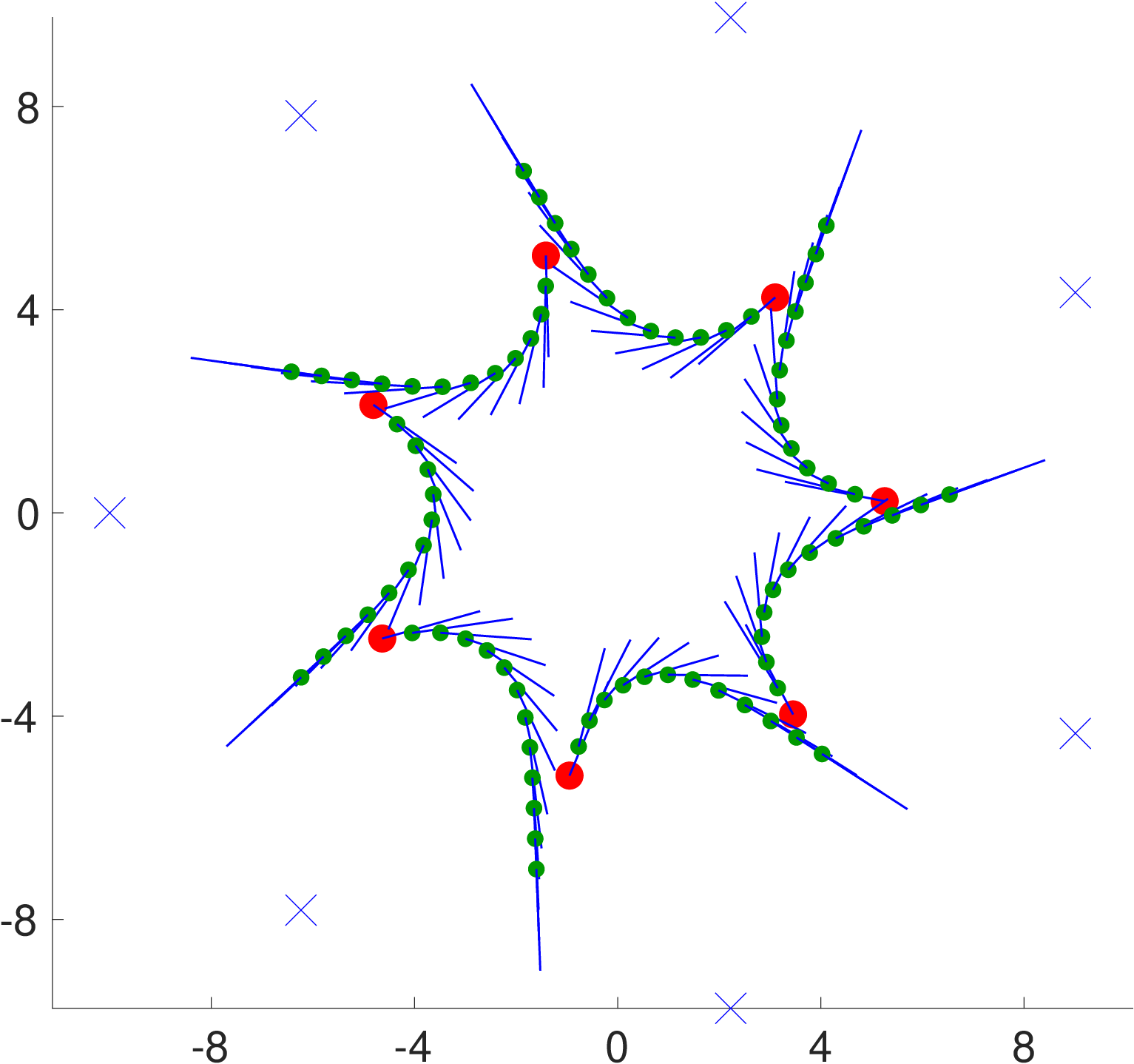}}\hfill
\subfloat[BC fails, $k = 11$]{\includegraphics[width=.25\columnwidth]{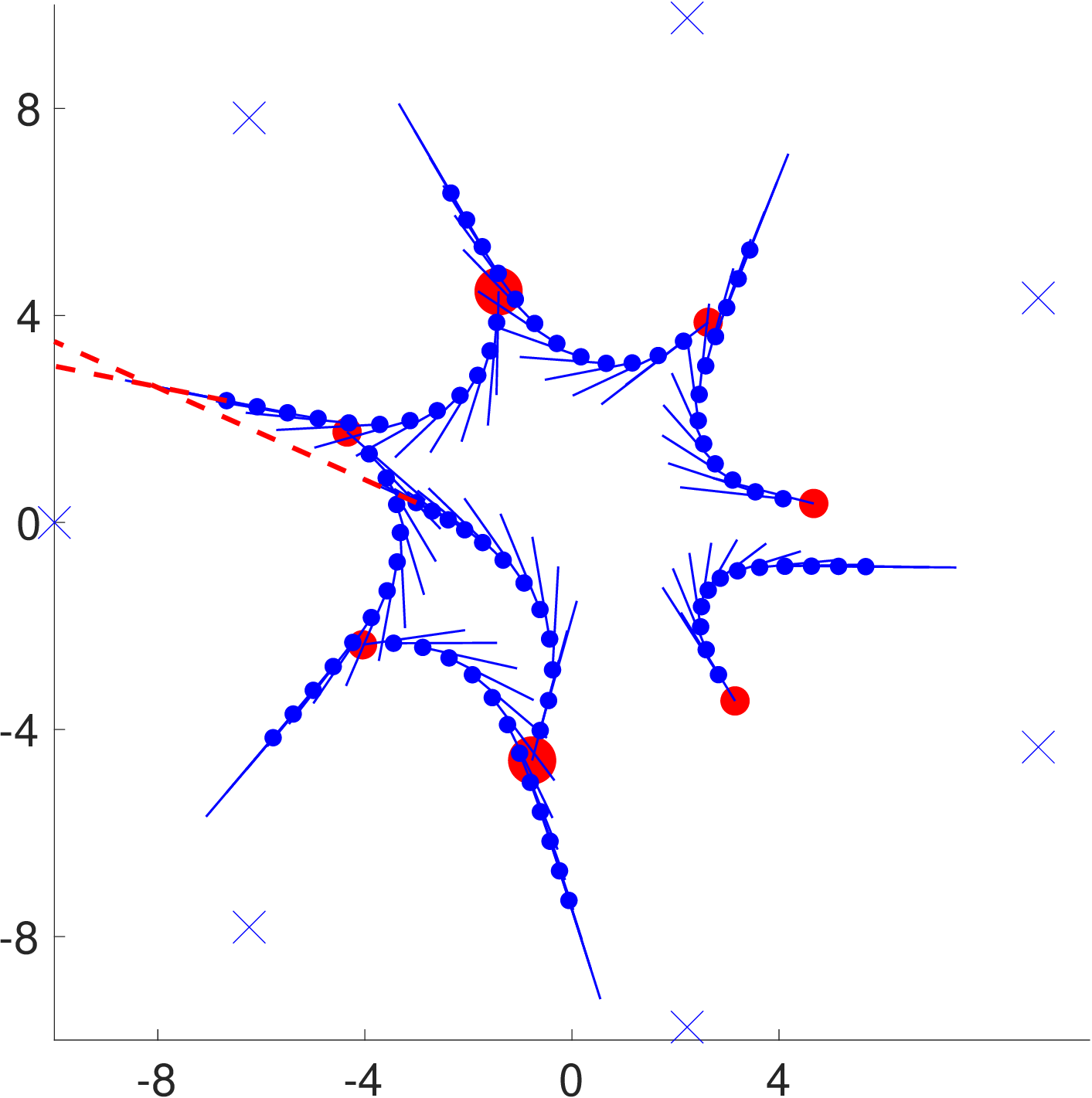}} \hfill
\subfloat[Final configuration, $k = 32$]{\includegraphics[width=.25\columnwidth]{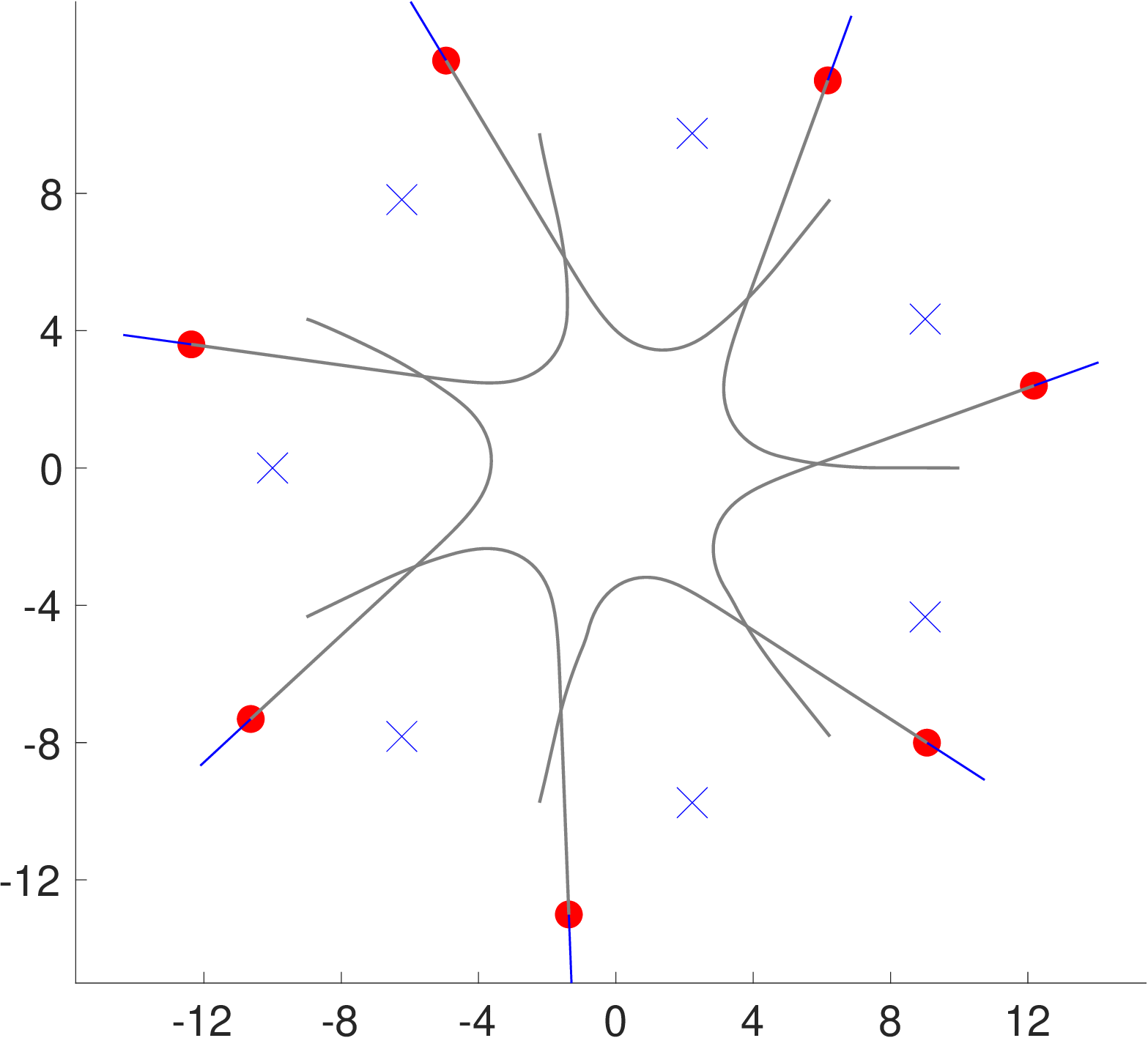}}\hfill
\caption{Simulation of the MAS with 7 robots. The DM performs system recovery after the BC produces an unsafe command sequence. 
The BC's proposed path is shown in part (c) at ${k = 11}$, where the two dotted red lines intersect, indicating the future paths of the agents cross. 
We represent current positions as red dots, future positions corresponding to the safe/unsafe command sequences as green/blue dots, velocities as blue lines, and agent trajectories as grey curves. 
}
\label{fig:bc_fail} \vspace{-2em}
\end{figure}

In the initial configuration, the agents are equally spaced on the boundary of a circle and are at rest. Agent $i$'s goal is to reach a target location $r_i$, located on the opposite side of the circle. The initial configuration of the MAS is shown in Figure~\ref{fig:bc_fail}(a), where the agents and their target locations are represented as red dots and blue crosses, respectively.
The safety property is absence of inter-agent collisions. A pair of agents is considered to collide if the Euclidean distance between them is less than a non-negative threshold  $d_{min}$. Thus, the safety property is that ${\left \| p_i - p_j \right \| > d_{min}}$ for all pairs of agents $i,j \in \mathcal{M}$ with $i \ne j$. 

Both the AC and the BC are designed using centralized Model Predictive Control (MPC), which produces command sequences as part of the solution of a nonlinear optimization problem.
For collision avoidance, we use a potential field formulation \cite{khatib1986real} in both the AC and BC.
%
While the AC tries to reach the target positions on the opposite side of the circle, the BC has a simpler goal of having each agent leave the circle.
Note that numerical methods for global nonlinear optimization, such as MATLAB's \texttt{fmincon} used in our implementation, do not provide a guaranteed optimal solution.
To create unsafe variants of the controllers, we simply limit the number of iterations used for optimization.
\um{The AC only outputs the first command of the command sequence, whereas the BC produces the full command sequence.} 
%
%
\um{Both the AC and the BC are high-level controllers that produce accelerations. In our simulations, we do not model the low-level controller; the plant dynamics 
work directly with the accelerations. When implementing our approach on physical robots, 
a trusted low-level controller will map the desired acceleration commands to actuator inputs.}

A centralized MPC controller produces a command sequence $\overline{s}$ of length $T$, where $T$ is the prediction horizon, and each command $\overline{s}[i]$ contains the accelerations for all agents to use at step $i$.

The centralized MPC controller solves the following optimization problem at each time~step~k: 
\begin{align}
\label{eq:cen_mpc}
\underset{a(k \mid k), \ldots, a(k+T-1 \mid k)}{\arg\min} \; &\sum_{t=0}^{T-1} J(k + t\mid k) 
\; + \;
\lambda \cdot \sum_{t=0}^{T-1} \| {a(k + t\mid k)}\|^2
\end{align}
where $a(k+t \mid k)$ and $J(k+t\mid k)$ are the predictions made at time step $k$ for the values at time step $k + t$ of the accelerations and the centralized (global) cost function $J$, respectively.
The first term is the sum of the centralized cost function, evaluated for $T$ time steps, starting at time step $k$.  It encodes the control objective.
The second term, scaled by a weight $\lambda > 0$, penalizes large control inputs.

\textbf{Advanced controller.}
The centralized cost function $J_{ac}$ for the AC contains  two  terms: (1)~a \emph{separation} term based on the inverse of the squared distance between each pair of agents (potential field term for collision avoidance); and (2)~a \emph{target seeking} term based on the distance between the agent and its target location.  
\begin{equation}
\label{eq:mas_ac}
    J_{ac} = \omega_s \sum_{i>j}\frac{1}{\left \| p_i - p_j \right \|^2} + \omega_t \sum_i \left \| p_i - r_i \right \|^2
\end{equation}
where $\omega_s, \omega_t \in \mathbb{R}$ are the weights of the separation term and target seeking terms.
The separation term promotes inter-agent spacing but does not guarantee collision avoidance. 
The AC generates a command sequence by solving the optimization problem in Eq.~\ref{eq:cen_mpc}, with $J$ replaced by $J_{ac}$.
The first command in that sequence is the AC's command; it is passed to the LBC.

\textbf{Baseline controller.}
The centralized cost function $J_{bc}$ for the BC contains two  terms.
As in Eq.~\ref{eq:mas_ac}, the first term is the separation term (collision avoidance based on potential fields). 
The second term is a \emph{divergence} term which forces the agents to move out of the circle by aligning their velocities with rays radially pointing out of the center of the circle.  
  
\begin{equation}
\label{eq:mas_bc}
    J_{bc} = \omega_s \sum_{i>j}\frac{1}{\left \| p_i - p_j \right \|^2} + \omega_d \sum_i \left ( 1 - \frac{\left (p_i - c \right ) \cdot v_i}{|p_i - c||v_i|} \right )
\end{equation}
where $\omega_s, \omega_d \in \mathbb{R}$ are the weights of the separation term and the divergence term, and $c$ is the center of the circle containing the initial configuration of the robots and their target locations. The control law for the BC is Eq.~\ref{eq:cen_mpc}, with $J$ replaced by $J_{bc}$. A zero acceleration is appended to the end of the BC's command sequence to help establish collision freedom for all future time steps. 

\textbf{Decision module.} 
The LBC combines accelerations from the AC and the BC,
producing the command sequence ${\overline{t} = [ac(x), bc(x'), \vec{0}]}$, where $x'$ is the next state after executing $ac(x)$ in state $x$.  
The function $\dmupdate(x, \overline{s}, \overline{t})$ accepts the proposed command sequence $\overline{t}$ if and only if $\overline{t}$ is a permanently safe command sequence.
For this system, a command sequence $\overline{ t}$ is considered permanently safe in a state $x$ if it satisfies the following two conditions. First, for all states in the state trajectory obtained by executing $\overline{t}$ from $x$, the Euclidean distance between every pair of distinct agents is at least $d_{min}$. Second, in the final state, for all pairs of distinct agents, the rays extending from their positions and in the directions of their velocities do not intersect.  
\um{ Any pair of 
agents that satisfies the second condition will not collide in the future, since the last command in the sequence $\overline{t}$ has zero acceleration.}
The initial permanently safe command sequence is a zero acceleration for all agents, as the agents start at rest. 

\textbf{MPC Parameters.}
In our case study, we use the following MPC parameters: $dt = 0.3~sec$, $d_{min} = 1.7$, $a_{max} = 1.5$, and $v_{max} = 2$. 
The length of the prediction horizon for MPC is ${T_{ac} = T_{bc} = 10}$.

\textbf{Successful Recovery After Failure.}
We first consider seven robotic agents initialized on a circle centered at the origin, with a radius of~10. 
The initial state of the system is shown in Figure~\ref{fig:bc_fail}(a). At ${k = 11}$, the BC produces an unsafe command sequence.  The state trajectory corresponding to the unsafe sequence is shown in blue.  As shown in Figure~\ref{fig:bc_fail}(c), the final paths of the two agents corresponding to the larger red dots cross after simulating the current state forward with the unsafe sequence.
Hence, at ${k = 11}$, the DM rejects the proposed command sequence and shifts control to the previous safe command sequence, which safely recovers the system. 
%
%
Here, we purposefully did not return control to the AC to demonstrate how the stored command sequence keeps the agents safe~\footnote{A video of the simulation 
is available at \url{https://youtu.be/bcVJBkGgnxA}.}.

\begin{figure}[t]
\centering
\includegraphics[width=.38\columnwidth]{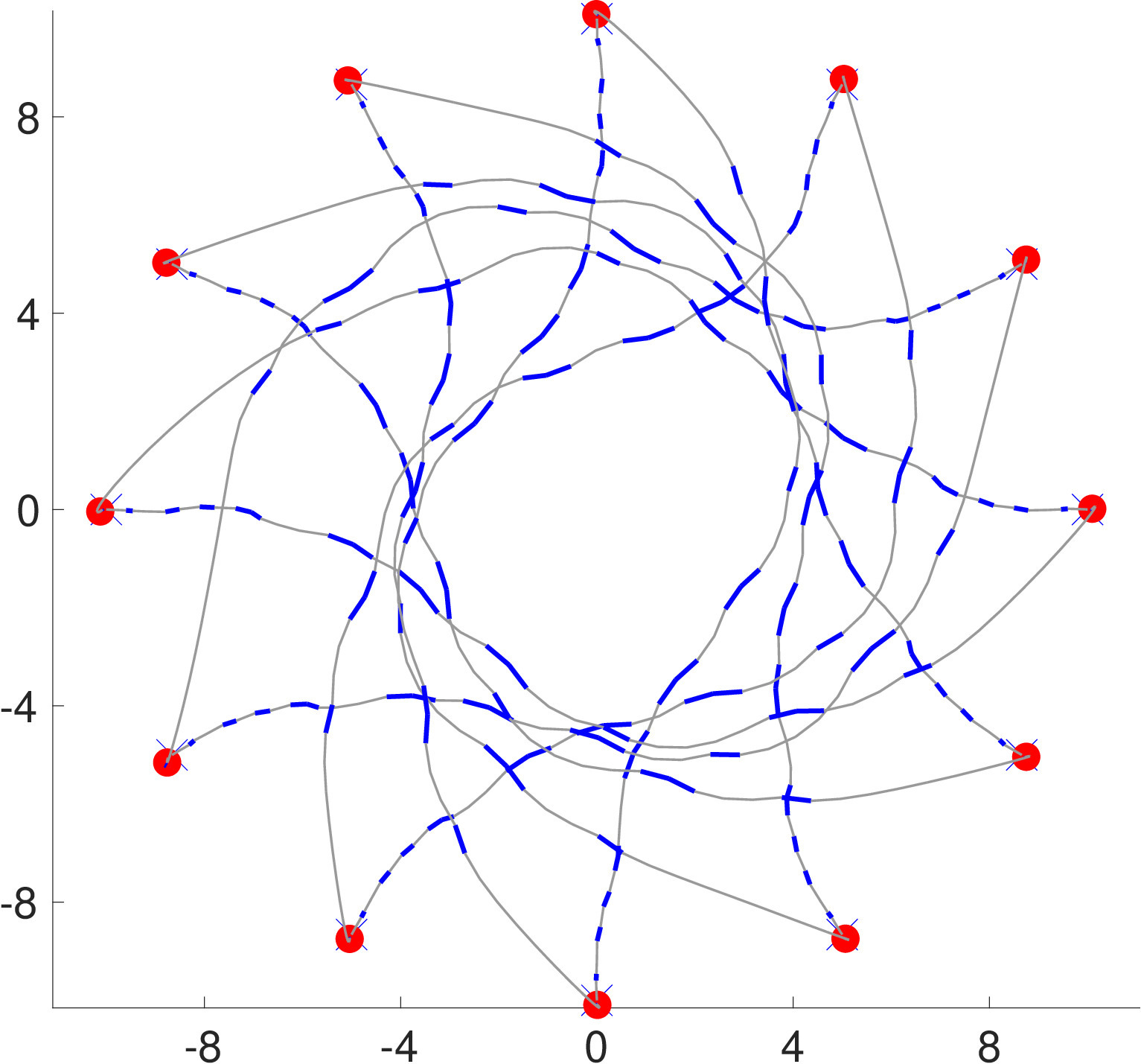}
\caption{Stress test of robotic MAS with 12 robots reaching their targets. 
Trajectory segments where stored command sequences are used are shown in blue.
} \label{fig:ac_success} \vspace{-2em}
\end{figure}

\textbf{Reverse Switching Scenario.}
We stress-tested the multi-robot system by initializing 12 agents on a circle of radius~$10$. The path of the agents is shown in Figure~\ref{fig:ac_success}. 
\um{There are 10 instances where the DM rejects the AC's proposed command sequence and instead uses the stored command sequence.}
Nonetheless, all agents reach their target locations without colliding, maintaining a minimum separation of 1.724 between any pair of agents\footnote{A video of the simulation is available at \url{https://youtu.be/qmk31jS6B2Y}.}.





\textbf{Handling Uncertainty.}
%
%
We next investigate the DM's runtime overhead when there is uncertainty in the robot's state or the dynamics.
The former case arises when the sensors used to determine the positions and velocities are  subject to sensor noise.
The latter case could be used to account for modeling errors, through disturbances on the positions and velocities at each step.

We continue to use the same MPC strategy as before; thus, the controllers ignore the uncertainty when generating proposed command sequences.  Only the logic used by the DM to accept or reject command sequences is modified  to account for uncertainty.
We examine the scenario shown before in Figure~\ref{fig:bc_fail}(b).
To account for the uncertainty, we perform an online reachability computation.
To do this, we use efficient methods for reachability for linear systems based on zonotopes~\cite{girard05}, which we implement in Python. 
Briefly, a zonotope is a set of states represented as an affine transformation of a unit box.
The unit box is associated with a number of \emph{generator vectors}, where each generator vector corresponds to one dimension of the box.
The computational efficiency of propagating sets over time using zonotopes relates to the number of generators.
Each agent has four state variables, two for position and two for velocity. %
The composed system with seven agents has 28 state variables.

In the situation shown in Figure~\ref{fig:reach}(a), the current state is assumed to have uncertainty independently in both position and velocity with an $L^2$ norm of 0.1. We use a 16-sided polygon to bound this uncertainty.
In the plot, the deterministic simulation is given, along with black polygons for each agent that show the states that might be reachable at each step due to the sensor uncertainty.
The uncertainty in the velocity causes the set to expand over time, since the open-loop command sequence does not attempt to compensate for the uncertainty.
%
The zonotope representation of the composed system needs 112 generator vectors to represent the initial states, which remains constant at every time step.

In the situation shown in Figure~\ref{fig:reach}(b), the initial state has very little error, but the dynamics is modified to have disturbances at each step.
For each component of each agent's position and velocity, we allow an external disturbance value to be added in the range $[-0.02, 0.02]$.
Since each agent has four independent disturbances, the zonotope representation of the composition will have 28 new generators added at each step.
After 12 steps, the final zonotope will have a total of 364 generators.

\textbf{Runtime.} To measure runtime, we used a standard laptop with a 2.70 GHz Intel Xeon E-2176M CPU and 32 GB RAM.
The method is
fast.
For the case of sensor uncertainty, computing the box bounds of the reachable set at all the steps takes about 1.5 milliseconds.
With uncertainty, even though the number of generators grows over time, it is not large enough to significantly affect the runtime.
The computation with disturbances requires about 2 milliseconds to complete.
%
We believe such execution times are sufficiently fast for use in the decision module.

\begin{figure}[t]
\centering
\subfloat[Reachable States with Sensor Error]
{\includegraphics[width=.48\columnwidth]{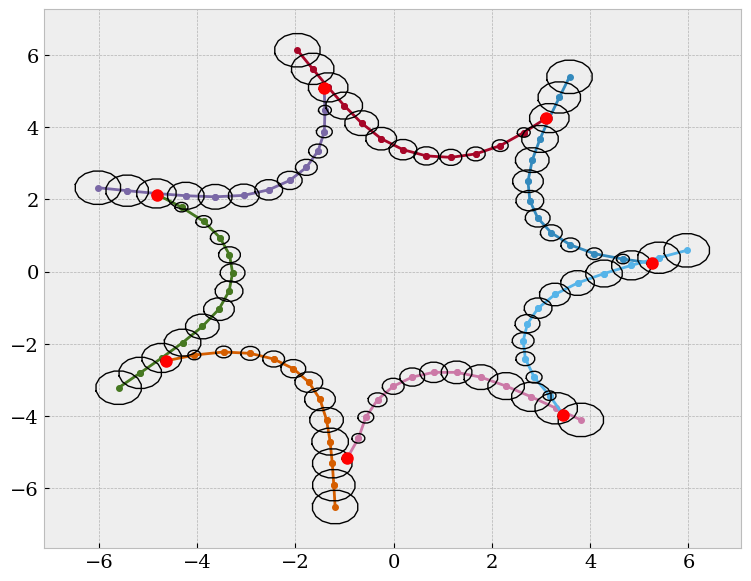}}\hfill
\subfloat[Reachable States with Disturbances]
{\includegraphics[width=.48\columnwidth]{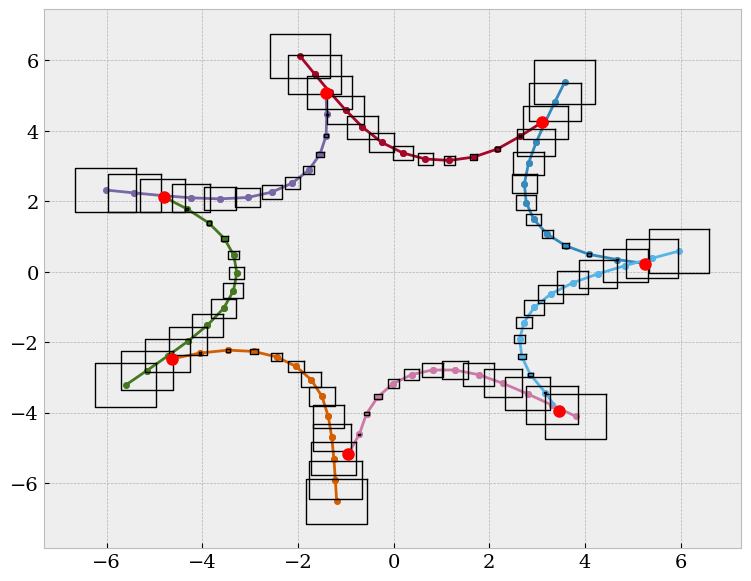}}\hfill
\caption{Zonotope reachability computes future states with uncertainty.}\label{fig:reach} \vspace{-2em}
\end{figure}

%
%
%
%
%
%
%

\subsection{Multi-Aircraft Collision Avoidance}

Our second evaluation system guarantees collision avoidance for groups of aircraft.
We use a full six-degrees-of-freedom F-16 simulation model~\cite{heidlauf2018arch}, based on dynamics taken from an Aerospace Engineering textbook~\cite{stevens2015aircraft}. 
Each aircraft is modeled with 16 state variables, including positional states, positional velocities, rotational states, rotational velocities, an engine thrust lag term, and integrator states for the low-level controllers.
These controllers actuate the system using the typical aircraft control surfaces---the ailerons, elevators, and rudder---as well as by setting the engine thrust.
The system evolves continuously with piece-wise nonlinear differential equations, where the function that computes the derivative given the state is provided as Python code.
In order to match the discrete-time plant model in Definition~\ref{eq:control_system}, we periodically select a control strategy with a frequency of once every two seconds.
The model further includes high-level autopilot logic for waypoint following, which we reuse in the advanced controller.

For the collision-avoidance baseline controller, our controller is based on the ACAS Xu system designed for collision avoidance in unmanned aircraft~\cite{kochenderfer2011robust}.
While the original system was designed using a partially observable Markov decision process (POMDP), the resultant controller was encoded in a large look-up table that used hundreds of gigabytes of storage~\cite{julian2019deep}.
To make the system more practical, one early approach considered a downsampling process followed by a lossy compression using neural networks~\cite{katz2017reluplex,julian2019deep}.
We use these downsampled neural networks as the BC and refer to this as the original system.

The system issues horizontal turn advisories based on the relative positions of two aircraft, an \emph{ownship} and an \emph{intruder}.
The system is similar to Simplex, where the output can be either \emph{clear-of-conflict}, where any command is allowed, or an override command that is one of \emph{weak-left}, \emph{weak-right}, \emph{strong-left} or \emph{strong-right}.
We adapt this system to the multi-aircraft case by having each aircraft run an instance of the system against every other aircraft, using the closest turn advisory as the output.
%
%

To create command sequences, the BC repeatedly advances the plant model and re-runs the collision avoidance system in a closed-loop fashion until the generated command sequence is permanently safe.
%
To check whether a generated command sequence is permanently safe, the DM checks that (i)~each aircraft's state stays within the model limits (e.g., no aircraft enters a stall), (ii)~all aircraft obey the safety distance constraint at all times, and (iii)~the execution ends in a state where the roll angle of each aircraft has been small (less than 15 degrees) and the distances between all pairs of aircraft has been increasing consecutively for several seconds.
If all aircraft continue to fly straight and level from such a configuration, their distance would increase and no collisions would occur in the future.

As with the multi-robot scenario, we examine cases where the initial aircraft state $x_0$ has all aircraft starting evenly-spaced, facing towards the center of a circle with a given initial diameter.
Each aircraft has an initial velocity of 807 ft/sec and an initial altitude of 1000 ft, both of which are maintained throughout the maneuver by the lower-level controllers.
The AC commands each aircraft to fly towards a waypoint past the opposite side of the circle, which would cause a collision at the center.
The safety property requires maintaining horizontal separation.
The \emph{near mid-air collision cylinder} (NMAC) uses a safe horizontal separation of 500 ft~\cite{marston2015acas}, although we will vary this in our evaluation.
For the initial permanently safe command sequence $s_0$, we have each aircraft fly in clockwise circles forever, which avoids collisions.
%

In addition to the AC being unsafe, the baseline controller should not be fully trusted for many reasons. 
The original POMDP formulation was not proven formally correct, not to mention the downsampling and lossy neural network compression.
While some research has examined proving open-loop properties for the neural network compression~\cite{katz2017reluplex,bak2021vnncomp,bak2020cav}, these do not imply \emph{closed-loop} collision avoidance.
Further, we use a multi-aircraft adaptation of the system, which could also lead to problems.
Although aspirationally, the system should handle up to 30 intruders~\cite{julian2019deep}, in practice most analysis has been performed on two aircraft scenarios.
Finally, the intended physical system response to the collision-avoidance commands is that \emph{weak-left} and \emph{weak-right} should cause turning at 1.5 degrees per second, whereas \emph{strong-left} and \emph{strong-right} turn at 3.0 degrees per second~\cite{julian2019deep}.
However, turning an aircraft in the F-16 model (as well as in the real world) is not an instantaneous process, and requires first performing a roll maneuver before the heading angle begins to change.
For these reasons, the BC in this scenario is also an unverified component, and we will show scenarios where it misbehaves.
Nonetheless, we will compose the incorrect AC with the incorrect BC to create a safe collision-avoidance system by using \bsa.

We now elaborate on three scenarios: (i)~a three aircraft case, which shows the safety of the system despite unsafe outputs, (ii)~a four aircraft case, which shows the increased transparency of \bsa, and (iii)~a 15 aircraft case, which shows safe navigation of a complex scenario.
Also, a seven aircraft case is presented in Appendix \ref{appendix:seven},
which shows the safety condition can be easily customized.

In all the plots in this section, we show snapshots at the time when the distance between the two closest aircraft is smallest.
The two red aircraft are the closest pair, and their distance is printed in the bottom right of each figure.
The solid line shows the historic path of each aircraft, and the dotted line is the future trajectory.

\textbf{Three Aircraft Scenario.}
The original collision avoidance system was designed with two aircraft in mind, an ownship and an intruder.
We adapted it to the multi-aircraft case, but this mismatch between the system design assumptions and usage scenario can lead to problems.
%
%
%
In Figure~\ref{fig:acasxu3}, we show such a scenario, where the initial circle diameter is 90,000 ft.
In Figure~\ref{fig:acasxu3}(a), the minimum distance between the top two aircraft is 175 ft, violating the near mid-air collision safety distance.
The other two subplots show the system using \bsa with a safety distance of 1500 ft; the minimum separation is 1602 ft, which satisfies the constraint.
 
 \begin{figure*}[t]
\centering
\subfloat[Original System]
{\includegraphics[width=.32\textwidth]{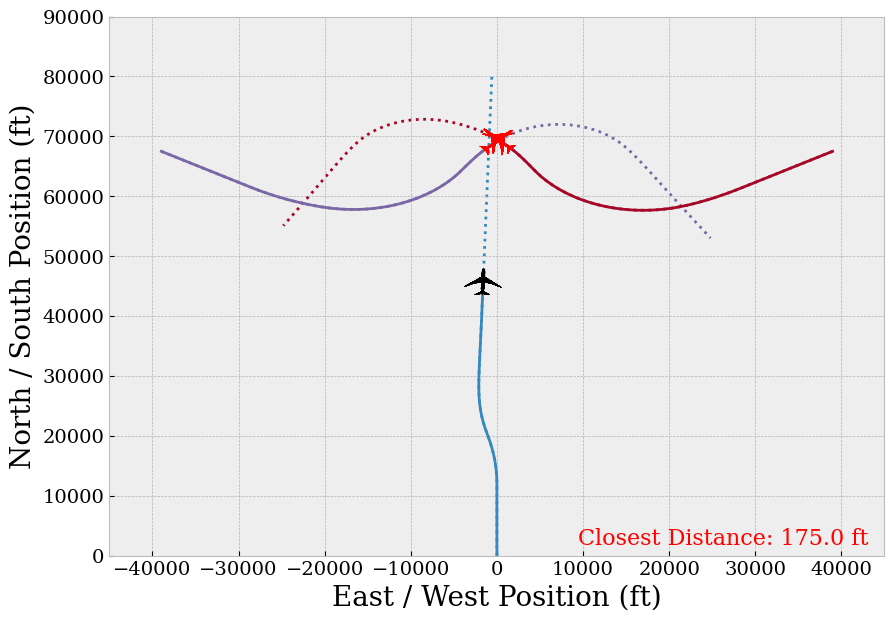}}
\subfloat[Black-Box Simplex]
{\includegraphics[width=.32\textwidth]{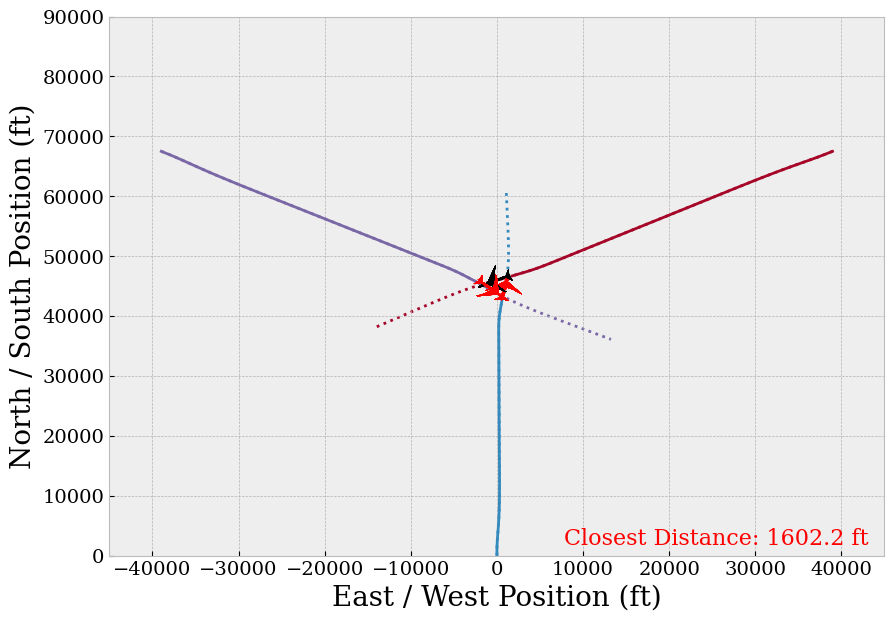}}
\subfloat[Black-Box Simplex (Zoomed In)]
{\includegraphics[width=.32\textwidth]{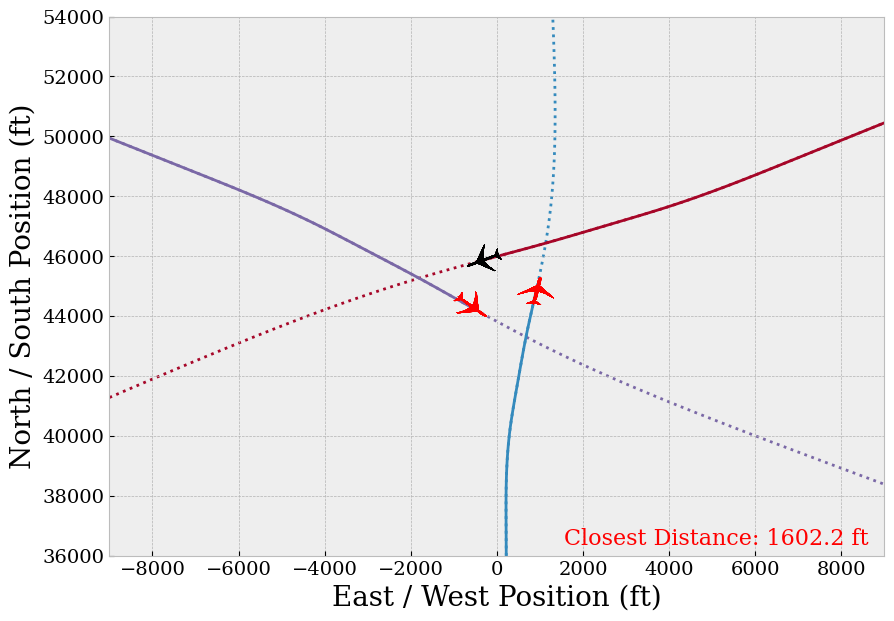}}
\caption{Black-Box Simplex is safe. In the three-aircraft case, the original system fails, whereas \bsa maintains the 1500 ft separation.}\label{fig:acasxu3} \vspace{-2em}
\end{figure*}

\textbf{Four Aircraft Scenario.}
Figure~\ref{fig:acasxu4} shows a four-aircraft scenario using an initial circle diameter of 70,000 ft.
In this case, both designs have safe executions. 
Using the original system leads to a minimum separation of 5342 ft, whereas the minimum separation with Black-Box Simplex is 1600 ft, much closer to the 1500 ft safety-distance constraint used in the DM.
Although both systems are safe, from the plots it is clear that the Black-Box Simplex version is more transparent, in the sense that it produces smaller modifications to the direct-line trajectories commanded by the AC.



\textbf{Fifteen Aircraft Scenario.}
Finally, we demonstrate the system's ability to safely navigate complex scenarios.
For this, we use a 15 aircraft scenario, with an initial circle diameter of 90,000 ft.
With 15 aircraft, the composed system has 240 real-valued state variables, each of which evolves according to piece-wise nonlinear differential equations.
The plot for this system was shown in the introduction in Figure~\ref{fig:acasxu15}.
While the original system is unsafe, Black-Box Simplex has a minimum separation of 1500.5 ft, just above the 1500 ft safety constraint used in the DM.
Another surprising observation is that in some of the cases, such as this 15-aircraft case and the seven-aircraft case shown in the appendix of the extended report\footnotemark[\value{footnote}], the aircraft perform something similar to a roundabout maneuver.  This is an emergent behavior, not something explicitly hardcoded or anticipated.
A video of this case is also available online\footnote{\url{https://youtu.be/Bhn0uqKCj7Q}}.

\textbf{Runtime.} The existing implementation uses numerical integration for the dynamics with an adaptive-step explicit Runge-Kutta scheme of order 5(4) from Python's \texttt{scipy} package.
On our laptop platform with default accuracy parameters, this runs at about 55 times faster than real-time per aircraft.
%

\begin{figure}[t]
\centering
\subfloat[Original System]
{\includegraphics[width=.48\columnwidth]{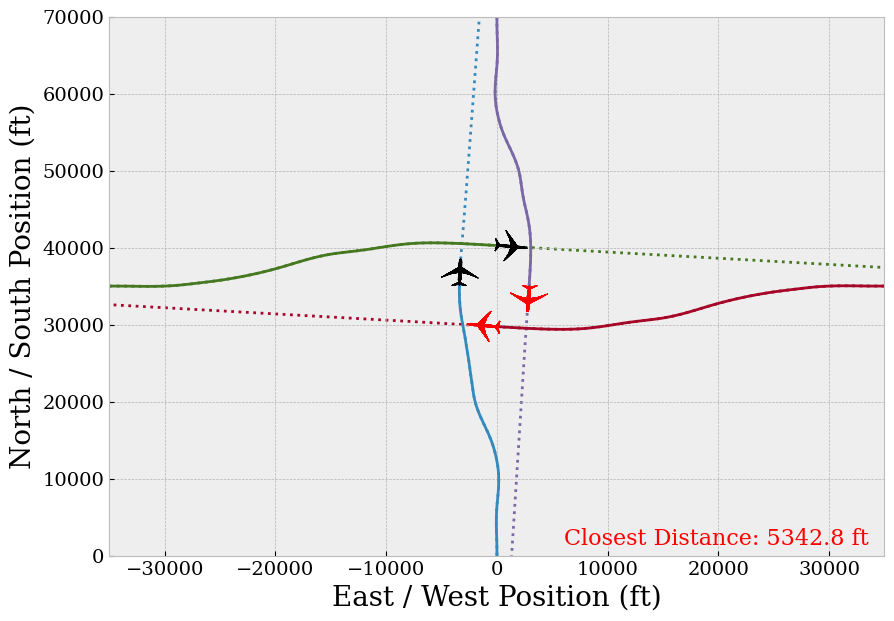}} ~~~
\subfloat[Black-Box Simplex]
{\includegraphics[width=.48\columnwidth]{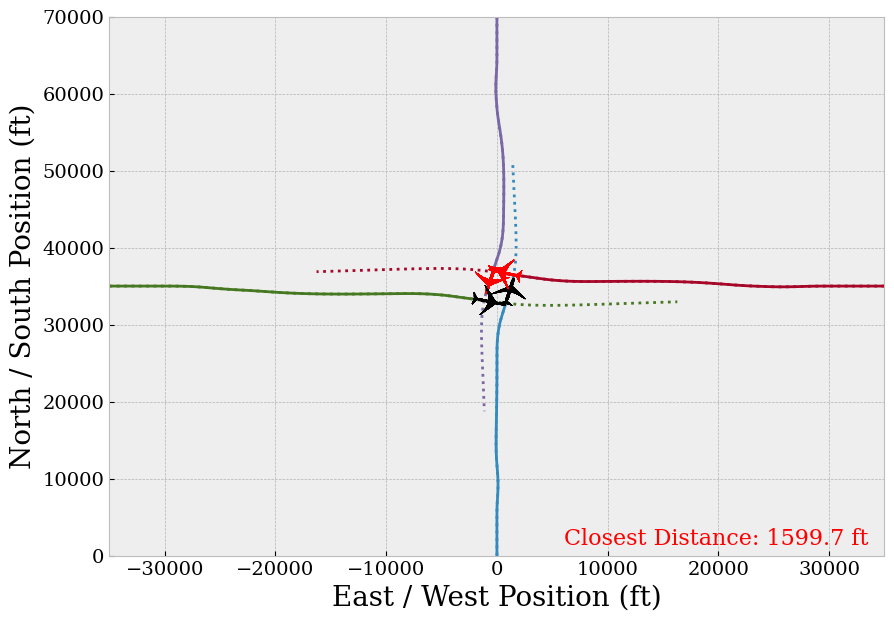}}\hfill
\caption{Black-Box Simplex is more transparent. For the four aircraft case, the original system is significantly more intrusive than Black-Box Simplex, which overrides commands just enough to guarantee the 1500 ft separation requirement. }\label{fig:acasxu4} \vspace{-2em}
\end{figure}

\section{Related Work}
\label{sec:related}

Reachability-based verification methods for black-box systems for waypoint following with uncertainty have been recently investigated in the ReachFlow framework~\cite{reachflow2020}.
%
%
ReachFlow builds upon the Flow* reachability tool~\cite{chen2013flow}, which is unlikely to scale to systems like the 240-variable 15-aircraft scenario. 

\um{A framework for safe trajectory planning using MILP for piecewise-linear vehicle models is presented in~\cite{Schouwenaars2005,schouwenaars2006}. 
The method relies on the ability of an MPC controller to produce command sequences where the terminal state in the prediction horizon is constrained to lie within a safe invariant set.  This provides a safe back-up command sequence for the next step in case the system fails to find a safe sequence. 
%
%
The scope of this work is limited to MPC, and it is not clear how to extend it to other types of controllers.
Moreover, the conditions for switching back from the stored return trajectory are not formalized. }

In the Contingency Model Predictive Control framework~\cite{alsterda2019}, an MPC controller maintains a contingency plan in addition to the nominal or desired plan to ensure safety during an identified potential emergency.
%
Like \bsa, the initial command is common to both plans.
%
%
In this framework, both plans must be generated using their custom version of MPC, whereas Black-Box Simplex works with independent baseline and advanced controllers of any design.

Similar frameworks have been considered for autonomous vehicles, using fail-safe backup plans and reachability analysis~\cite{magdici2016fail}. 
In this case, the target was planning for autonomous vehicles where most likely trajectories are used for other vehicles but safety can still be provided if emergency maneuvers are performed instead.
Other ideas such as Safety Net Control~\cite{schurmann2021formal} extend the approach to use backreachability and underapproximations of nonlinear reachable sets while taking computation time into account.

Designing safe switching logic for a given baseline controller is related to the concept of computing viability kernels~\cite{saint1994approximation} (closed controlled-invariant subsets) in control theory.
This often requires set operations which can be inefficient in high-dimensional spaces with nonlinear dynamics, although there has been some progress on this~\cite{kaynama2012computing,maidens2013lagrangian}.

Simplex designs have also been considered that use a combination of offline analysis with online reachability~\cite{bak2014rtss}.
Again, though, reachability computation is currently intractable for large nonlinear systems, and requires symbolic differential equations.
Other work has used Simplex to provide safety guarantees for neural network controllers with online retraining~\cite{phan2020}.
In these approaches, the baseline controller must be verified ahead of time.

Online simulation-based methods have also been investigated to secure power grids from insider attacks~\cite{mashima2018securing}.
As with this work, fast online simulation is critical, although the goal there is system security not safe high-level control design.

The design of the MPC controllers for our multi-robot case study is similar to Control Barrier Function methods~\cite{borrmann2015,gurriet2018online} and Implicit Active Set Invariance Filtering~\cite{gurriet2019scalable}. 
There, a runtime assurance system was used to provide minimally perturbed advanced controller commands, computed using a constrained-optimization problem.
However, the optimization problem might become infeasible or global nonlinear optimization could perform poorly at one of the steps at runtime, causing this method to be unsafe.
With Black-Box Simplex, failure of the baseline controller does not compromise safety.

%
%

%
%
%

\section{Conclusions}
\label{sec:conclusions}

We have presented the Black-Box Simplex Architecture, a methodology for constructing safe CPS from unverified black-box high-level controllers.
Unlike the classical Simplex design, the baseline controller does not  need to be statically verified and can even be incorrect.
The tradeoff is that the decision module performs more extensive runtime checking and stores backup command sequences produced by the black-box baseline controller at previous time steps. The complexity of runtime checking depends on the nature of the system model. For deterministic models, simulation suffices. However, if the model has uncertainty then we need to perform online reachability analysis.

\bsa reduces the difficult problem of proving high-level safety to a simpler problem of \emph{performance optimization}: ensuring that the runtime checking completes before a  decision is needed.
The practicality of the approach was demonstrated through two significant case studies, including a mid-air collision avoidance system for groups of F-16 aircraft created from imperfect logic encoded in neural networks. This case study involves a highly complex nonlinear system with  over a hundred dimensional variables and a neural-network-based controller.
Black-Box Simplex provides a feasible path for runtime verification of systems that are otherwise unverifiable in practice.

\vspace{1em}
\noindent
\textbf{Acknowledgement.} This material is based upon work supported by National Science Foundation (NSF) under grant numbers OIA-2134840, OIA-2040599, CCF-1918225, CCF-1954837 and CPS-1446832, the Office of Naval Research (ONR) under grants N000142112719 and N000142212156, and the Air Force Office of Scientific Research (AFOSR) under award numbers FA9550-19-1-0288, FA9550-21-1-0121, FA9550-22-1-0450. 
Any opinions, findings, and conclusions or recommendations expressed in this material are those of the author(s) and do not necessarily reflect the views of the NSF, United States Air Force or the United States Navy. 
An early version of this work was presented in the CAADCPS 2021 workshop under the title ``Safe CPS from Unsafe Controllers''~\cite{mehmood2021safe}.

\bibliographystyle{splncs04}
\bibliography{References}

\clearpage
\appendix

\section{Seven Aircraft Case - Safety Condition Customization}
\label{appendix:seven}

We investigated a seven aircraft scenario with an initial circle diameter of 70,000 ft.
Here, the original system violates the horizontal separation constraint, and the minimum separation distance is 277 ft.
We run Black-Box Simplex on this system using three different safety distances, 1500 ft, 1000 ft, and 500 ft.
All avoid collisions, and as the safety distance is decreased, the observed minimum distance also decreases.
%
%
%
%
This shows that Black-Box Simplex can be easily customized to a change in the safety requirement.
Doing this for the original system would require significant effort in recomputing the POMDPs and retraining the neural networks to perform a compression of the action tables.
Plots of the seven aircraft trajectories are provided in Figure~\ref{fig:acasxu7}.  Video of the 1000 ft case is available at \url{https://youtu.be/6ZXjk8k-Xqs}.

\begin{figure}[ht]
\centering
\subfloat[Original System (failure)]
{\includegraphics[width=.44\textwidth]{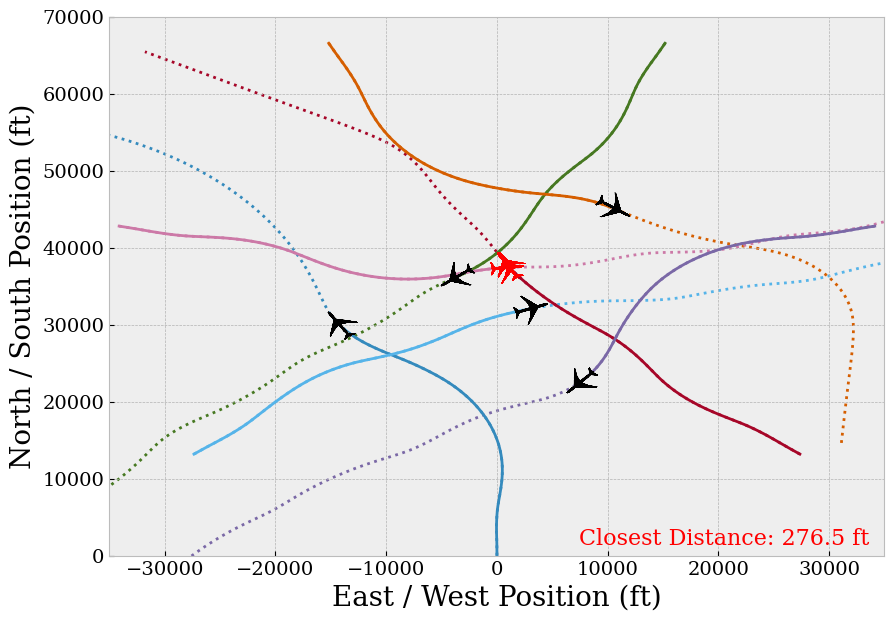}}\hfill
\subfloat[Black-Box Simplex with Safety Distance 1500 ft]
{\includegraphics[width=.44\textwidth]{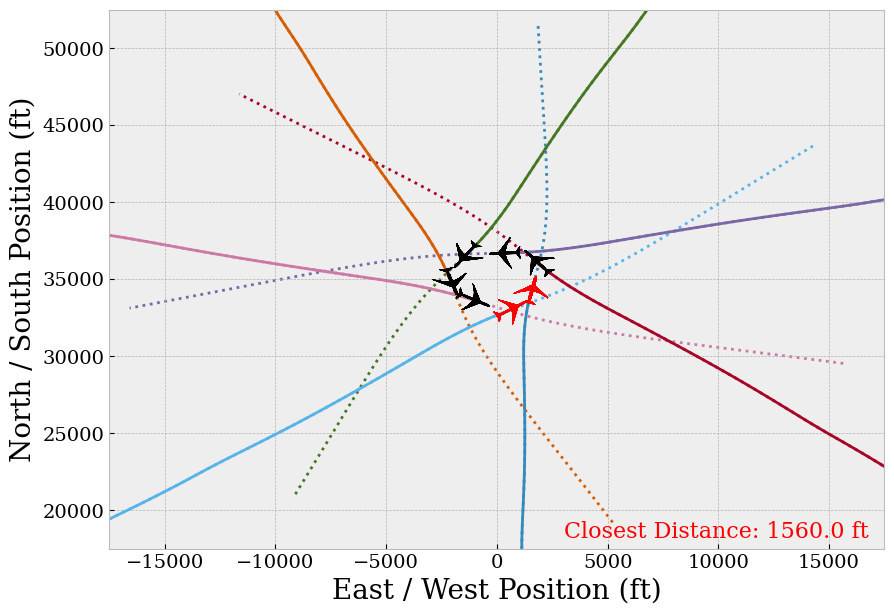}}\hfill
\subfloat[Black-Box Simplex with Safety Distance 1000 ft]
{\includegraphics[width=.44\textwidth]{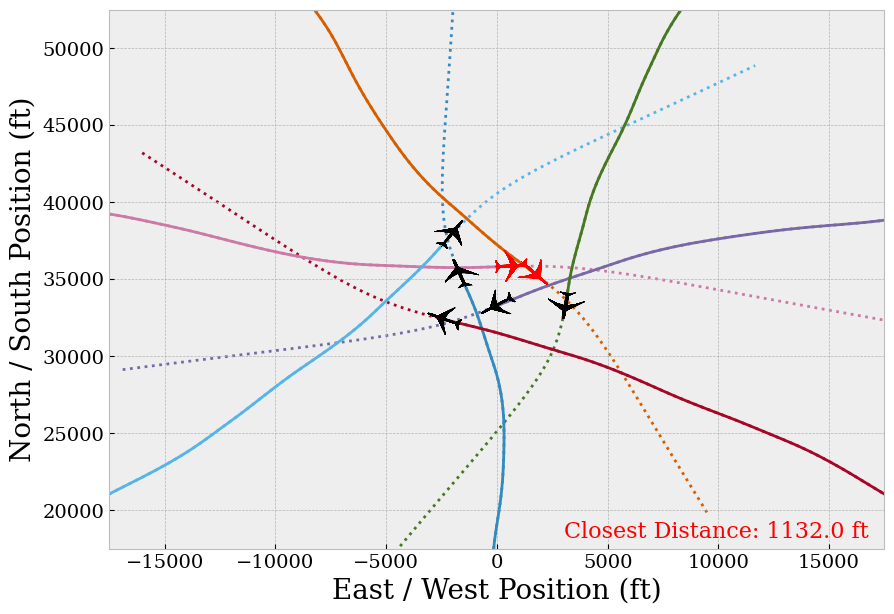}}\hfill
\subfloat[Black-Box Simplex with Safety Distance 500 ft]
{\includegraphics[width=.44\textwidth]{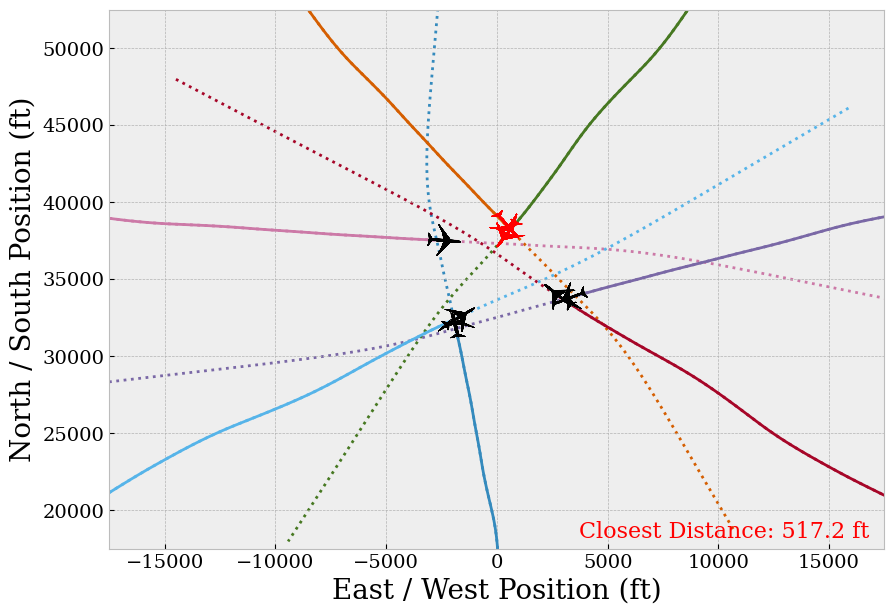}}\hfill
\caption{Black-Box Simplex is easily customizable. In the seven aircraft case, adjusting the safety distance in the decision module results in different system behaviors. In each case, the advanced controller command is overridden only enough to guarantee the corresponding safety constraint.}\label{fig:acasxu7}
\end{figure}

\end{document}